\documentclass[english,showpacs,preprint,superscriptaddress]{revtex4-1}
\usepackage{lmodern}
\usepackage{lmodern}
\usepackage[T1]{fontenc}
\setcounter{secnumdepth}{3}
\usepackage{amsmath}
\usepackage{amssymb}
\usepackage{esint}

\makeatletter

%
\usepackage{amsthm}\usepackage{latexsym}\usepackage{bm}\usepackage{amsfonts}\setcounter{MaxMatrixCols}{30}

\usepackage{babel}

\usepackage{babel}

\usepackage{hyperref}

\makeatother

\usepackage{babel}
\begin{document}

\title{Local Energy Conservation Law for Spatially-Discretized Hamiltonian
Vlasov-Maxwell System}

\author{Jianyuan Xiao}

\affiliation{School of Nuclear Science and Technology and Department of Modern
Physics, University of Science and Technology of China, Hefei, Anhui
230026, China}

\affiliation{Key Laboratory of Geospace Environment, CAS, Hefei, Anhui 230026,
China}

\author{Hong Qin }
\email{corresponding author: hongqin@ustc.edu.cn}

\selectlanguage{english}%

\affiliation{School of Nuclear Science and Technology and Department of Modern
Physics, University of Science and Technology of China, Hefei, Anhui
230026, China}

\affiliation{Plasma Physics Laboratory, Princeton University, Princeton, NJ 08543,
U.S.}

\author{Jian Liu}

\affiliation{School of Nuclear Science and Technology and Department of Modern
Physics, University of Science and Technology of China, Hefei, Anhui
230026, China}

\affiliation{Key Laboratory of Geospace Environment, CAS, Hefei, Anhui 230026,
China}

\author{Ruili Zhang}

\affiliation{School of Nuclear Science and Technology and Department of Modern
Physics, University of Science and Technology of China, Hefei, Anhui
230026, China}

\affiliation{Key Laboratory of Geospace Environment, CAS, Hefei, Anhui 230026,
China}
\begin{abstract}
Structure-preserving geometric algorithm for the Vlasov-Maxwell (VM)
equations is currently an active research topic. We show that spatially-discretized
Hamiltonian systems for the VM equations admit a local energy conservation
law in space-time. This is accomplished by proving that for a general
spatially-discretized system, a global conservation law always implies
a discrete local conservation law in space-time when the algorithm
is local. This general result demonstrates that Hamiltonian discretizations
can preserve local conservation laws, in addition to the symplectic
structure, both of which are the intrinsic physical properties of
infinite dimensional Hamiltonian systems in physics. 
\end{abstract}

\keywords{local energy conservation}

\pacs{02.60.Jh, 03.50.-z, 52.65.Rr, 52.25.Dg}

\maketitle
\global\long\def\EXP{\times10}
 \global\long\def\rmd{\mathrm{d}}
 \global\long\def\xs{ \mathbf{x}_{s}}
 \global\long\def\dotxs{\dot{\mathbf{x}}_{s}}
 \global\long\def\bfx{\mathbf{x}}
 \global\long\def\bfv{\mathbf{v}}
 \global\long\def\bfA{\mathbf{A}}
 \global\long\def\bfB{\mathbf{B}}
 \global\long\def\bfS{\mathbf{S}}
 \global\long\def\bfG{\mathbf{G}}
 \global\long\def\bfE{\mathbf{E}}
 \global\long\def\bfu{\mathbf{u}}
 \global\long\def\bfe{\mathbf{e}}
 \global\long\def\bfd{\mathbf{d}}
 \global\long\def\rme{\mathrm{e}}
 \global\long\def\rmi{\mathrm{i}}
 \global\long\def\rmq{\mathrm{q}}
 \global\long\def\ope{\omega_{pe}}
 \global\long\def\oce{\omega_{ce}}
 \global\long\def\FIG#1{Fig.~#1}
 \global\long\def\EQ#1{Eq.~(\ref{#1})}  \global\long\def\SEC#1{Sec.~#1}
 \global\long\def\REF#1{Ref.~\cite{#1}}
 \global\long\def\CURLD{ {\mathrm{curl_{d}}}}
 \global\long\def\DIVD{ {\mathrm{div_{d}}}}
 \global\long\def\CURLDP{ {\mathrm{curl_{d}}^{T}}}
 \global\long\def\cpt{\captionsetup{justification=raggedright }}
 \global\long\def\act{\mathcal{A}}
 \global\long\def\calL{\mathcal{L}}
 \global\long\def\calJ{\mathcal{J}}
 \global\long\def\WZERO#1{W_{\sigma_{0} I}\left( #1 \right)}
 \global\long\def\WONE#1{W_{\sigma_{1} J}\left( #1 \right)}
\global\long\def\WONEA#1{W_{\sigma_{1} I}\left( #1 \right)}
 \global\long\def\WONEJp#1{W_{\sigma_{1} J'}\left( #1 \right)}
 \global\long\def\WTWO#1{W_{\sigma_{2} K}\left( #1 \right)}


The dynamics of a collection of charged particles and electromagnetic
fields is governed by the well-known Vlasov-Maxwell (VM) equations
\begin{alignat}{1}
\frac{\partial f_{s}}{\partial t}+\bfv\cdot\nabla f_{s}+\frac{q_{s}}{m_{s}}\left(\bfE+\bfv\times\bfB\right)\cdot\frac{\partial f_{s}}{\partial\bfv}=0\thinspace, & \text{}\label{eq:VM}\\
\frac{\partial\bfB}{\partial t}=-\nabla\times\bfE\thinspace,\\
\frac{\partial\bfE}{\partial t}=\nabla\times\bfB-\sum_{s}\int\rmd\bfv q_{s}f_{s}\left(\bfx,\bfv,t\right)\bfv\thinspace,\label{eq:AL}
\end{alignat}
where $\bfE$ and $\bfB$ are electromagnetic fields, $f_{s}$, $m_{s}$
and $q_{s}$ are the number density distribution function, mass and
charge of the $s$'th particle, respectively. Here the permittivity
and permeability are set to 1 for simple notation. This set of equations
is also a Hamiltonian Partial Differential Equation (PDE) \cite{morrison1980maxwell,weinstein1981comments,marsden1982hamiltonian},
which means that solutions of the equations conserve the symplectic
structure \cite{marsden1998multisymplectic,marsden2013introduction}
and various types of invariants. As one of these invariants, the Hamiltonian
itself is conserved, i.e.,
\begin{eqnarray}
\frac{\partial}{\partial t}\int\rmd\bfx\left(\frac{E^{2}+B^{2}}{2}+\sum_{s}\int\rmd\bfv\left(\frac{1}{2}m_{s}v^{2}f_{s}\left(\bfx,\bfv;t\right)\right)\right)=0~.
\end{eqnarray}
For the Vlasov-Maxwell equations and many other Hamiltonian PDE systems,
there are also local conservation laws, which can be written in the
following form 
\begin{eqnarray}
\frac{\partial p}{\partial t}+\nabla\cdot\bfu=0~,
\end{eqnarray}
where $p$ is a scalar field or a component of a tensor or vector
field, and $\bfu$ is the flux corresponding to $p$. As an important
example, the local energy conservation law for the Vlasov-Maxwell
system reads 
\begin{eqnarray}
 &  & \frac{\partial}{\partial t}\left(\frac{E^{2}+B^{2}}{2}+\sum_{s}\int\rmd\bfv\left(\frac{1}{2}m_{s}v^{2}f_{s}\left(\bfx,\bfv;t\right)\right)\right)+\nonumber \\
 &  & \nabla\cdot\left(\bfE\times\bfB+\sum_{s}\int\rmd\bfv\left(\frac{1}{2}m_{s}v^{2}\bfv f_{s}\left(\bfx,\bfv;t\right)\right)\right)=0~.\label{EqnENEcons}
\end{eqnarray}
It can be verified directly by using the Vlasov-Maxwell equations
\eqref{eq:VM}-\eqref{eq:AL}. The conservation law can also be obtained
by using Noether's theorem and the weak Euler-Lagrangian equations
\cite{qin2014field}. Local conservation laws are more fundamental
and practical than global conservation laws, because we often consider
systems without global conservations, for example, systems with open
boundaries and particle sources. Nowadays, Particle-In-Cell (PIC)
simulations \cite{birdsall1991plasma,hockney1988computer} are commonly
used in the investigation of Vlasov-Maxwell systems \cite{Cary93,nieter2004vorpal,Xiang08,germaschewski2016plasma,huang16,Qiang2016,Oeftiger16,Planche16},
and advanced structure-preserving geometric algorithms based on variational
or Hamiltonian discretization have been developed recently \cite{Squire4748,squire2012geometric,xiao2013variational,kraus2013variational,evstatiev2013variational,Shadwick14,xiao2015variational,xiao2015explicit,crouseilles2015hamiltonian,Qin15JCP,he2015hamiltonian,he2016hamiltonian,qin2016canonical,Webb16,kraus2016gempic,xiao2016explicit}.
Some of these structure-preserving PIC schemes are able to bound the
global energy errors for all simulation time-steps and are effective
for solving multi-scale problems \cite{xiao2013variational,xiao2015explicit,xiao2016explicit}.
For these algorithms, the corresponding spatially-discretized systems
conserve global energy as a consequence of time symmetry admitted
by the system. However, as an essential physical property, the discrete
local energy conservation has not yet been discussed. If a discrete
system has a local conservation law, it implies that at every grid
point there is a conservation law, which is much stronger than just
one global conservation law for the entire system. There have been
some investigations on the discrete local energy conservation law
associated with Yee's FDTD schemes for Maxwell's equations \cite{chew1994electromagnetic,de1995poynting}.
However, the local conservation law for discrete particle-field systems
is still an unexplored topic. In this paper, we prove a discrete local
energy conservation law for the spatially-discretized Vlasov-Maxwell
system described in Ref. \cite{xiao2015explicit}. This is accomplished
by proving in the Appendix a theorem stating that for a general spatially-discretized
system, a global conservation law always implies a discrete local
conservation law in space-time when the algorithm is local. Here,
an algorithm is called local if the time-advance of a field at a grid
point or a particle involving only its neighboring grid points and
particles. With this theorem, for any geometric spatial-discretizations
with local algorithms, we only need to search for global conservation
laws and then the corresponding local conservation laws are automatically
satisfied. This general result demonstrates that Hamiltonian discretizations
can preserve local conservation laws in space-time, in addition to
the symplectic structure, both of which are the intrinsic physical
properties of many important infinite dimensional Hamiltonian systems
in physics. 

The idea of structure-preserving spatial discretization of the Vlasov-Maxwell
equations can be traced back to Lewis \cite{lewis1970energy} who
proposed a spatial-discretized Lagrangian for Vlasov plasmas. Today,
modern geometric discretizations for constructing PIC schemes are
based on Discrete Exterior Calculus (DEC) \cite{hirani2003discrete,desbrun2005discrete},
interpolation forms \cite{whitney1957geometric,desbrun2008discrete,xiao2015explicit}
or Finite Element Exterior Calculus (FEEC) \cite{arnold2006finite,arnold2010finite,monk2003finite}
to ensure the conservation of charge, the gauge invariance, and the
symplectic structure \cite{Squire4748,squire2012geometric,xiao2015explicit,he2016hamiltonian,kraus2016gempic,xiao2016explicit}.
We start from the Lagrangian of the spatially-discretized Vlasov-Maxwell
system in Ref. \cite{xiao2015explicit}, 
\begin{multline}
L_{sd}=\frac{1}{2}\left(\sum_{J}\left(-\dot{\bfA}_{J}-\sum_{I}{\nabla_{\mathrm{d}}}_{JI}\phi_{I}\right)^{2}-\sum_{K}\left(\sum_{J}\CURLD_{KJ}\bfA_{J}\right)^{2}\right)\Delta V+\\
\sum_{s}\left(\frac{1}{2}m_{s}\dot{\bfx}_{s}^{2}+q_{s}\left(\dot{\bfx}_{s}\cdot\sum_{J}W_{\sigma_{1J}}\left(\bfx_{s}\right)\bfA_{J}-\sum_{I}W_{\sigma_{0I}}\left(\bfx_{s}\right)\phi_{I}\right)\right)~.
\end{multline}
Here, integers $I$, $J$ and $K$ are indices of grid points in a
cubic-mesh, and $\nabla_{\mathrm{d}}$, $\CURLD$ and $\DIVD$ are
the discrete gradient, curl and divergence operators defined as follows,
\begin{eqnarray}
\left({\nabla_{\mathrm{d}}}\phi\right)_{i,j,k} & = & [\phi_{i+1,j,k}-\phi_{i,j,k},\phi_{i,j+1,k}-\phi_{i,j,k},\phi_{i,j,k+1}-\phi_{i,j,k}]~,\label{EqnDEFGRADD}\\
\left(\CURLD\bfA\right)_{i,j,k} & = & \left[\begin{array}{c}
\left({A_{z}}_{i,j+1,k}-{A_{z}}_{i,j,k}\right)-\left({A_{y}}_{i,j,k+1}-{A_{y}}_{i,j,k}\right)\\
\left({A_{x}}_{i,j,k+1}-{A_{x}}_{i,j,k}\right)-\left({A_{z}}_{i+1,j,k}-{A_{z}}_{i,j,k}\right)\\
\left({A_{y}}_{i+1,j,k}-{A_{y}}_{i,j,k}\right)-\left({A_{x}}_{i,j+1,k}-{A_{x}}_{i,j,k}\right)
\end{array}\right]^{T}~,\label{DEFCURLD}\\
\left({\DIVD}\bfB\right)_{i,j,k} & = & \left({B_{x}}_{i+1,j,k}-{B_{x}}_{i,j,k}\right)+\left({B_{y}}_{i,j+1,k}-{B_{y}}_{i,j,k}\right)+\nonumber \\
 &  & \left({B_{z}}_{i,j,k+1}-{B_{z}}_{i,j,k}\right)~.\label{EqnDEFDIVD}
\end{eqnarray}
These operators are local linear operators on the discrete fields
$\phi_{I}$, $\bfA_{J}$ and $\bfB_{K}$. Functions $W_{\sigma_{0J}}$,
$W_{\sigma_{1I}}$ and $W_{\sigma_{2K}}$ are interpolation functions
(Whitney forms) for 0-forms, e.g., scalar potential, 1-forms, e.g.,
vector potential, and 2-forms, e.g., magnetic fields, respectively.
These discrete operators and interpolation functions satisfy the following
properties \cite{whitney1957geometric,hirani2003discrete,desbrun2008discrete,xiao2015explicit},
\begin{eqnarray}
\nabla\sum_{I}W_{\sigma_{0I}}\left(\bfx\right)\phi_{I} & = & \sum_{I,J}W_{\sigma_{1J}}\left(\bfx\right){\nabla_{\mathrm{d}}}_{JI}\phi_{I}~,\label{EqnD0to1FORMAPP}\\
\nabla\times\sum_{J}W_{\sigma_{1J}}\left(\bfx\right)\bfA_{J} & = & \sum_{J,K}W_{\sigma_{2K}}\left(\bfx\right){\CURLD}_{KJ}\bfA_{J}~,\label{EqnD1to2FORMAPP}\\
\nabla\cdot\sum_{K}W_{\sigma_{2K}}\left(\bfx\right)\bfB_{K} & = & \sum_{K,L}W_{\sigma_{3L}}\left(\bfx\right){\DIVD}_{LK}\bfB_{K}~.\label{EqnD2to3FORMAPP}
\end{eqnarray}
Periodic boundary in all three directions are adopted to simplify
the discussion. Because time $t$ does not explicitly appear in the
Lagrangian, the total energy 
\begin{eqnarray}
H_{sd} & = & \frac{\partial L_{sd}}{\partial\dot{q}}\dot{q}^{T}-L_{sd}~\\
 & = & \frac{1}{2}\Delta V\left(\sum_{J}\bfE_{J}^{2}+\sum_{K}\bfB_{K}\right)+\sum_{s}\frac{1}{2}m_{s}\dot{\bfx}_{s}^{2}~
\end{eqnarray}
is conserved, i.e., $\dot{H}_{sd}=0$. Here, $q$ is the generalized
coordinates, i.e., $q=[\bfA_{J},\phi_{I},\bfx_{s}]$, and $\bfE_{J}$
and $\bfB_{K}$ are discrete electromagnetic fields defied as 
\begin{eqnarray}
\bfE_{J} & = & -\dot{\bfA}_{J}-\sum_{J}{\nabla_{\rmd}}_{JI}\phi_{I}~,\\
\bfB_{K} & = & \sum_{J}\CURLD_{KJ}\bfA_{J}~.
\end{eqnarray}
The corresponding Poisson bracket for this system is \cite{xiao2015explicit}
\begin{multline}
\left\{ F,G\right\} =\frac{1}{\Delta V}\sum_{J}\left(\frac{\partial F}{\partial\bfE_{J}}\cdot\sum_{K}\frac{\partial G}{\partial\bfB_{K}}\CURLD_{KJ}-\sum_{K}\frac{\partial F}{\partial\bfB_{K}}\CURLD_{KJ}\cdot\frac{\partial G}{\partial\bfE_{J}}\right)+\\
\sum_{s}\frac{1}{m_{s}}\left(\frac{\partial F}{\partial\bfx_{s}}\cdot\frac{\partial G}{\partial\dot{\bfx}_{s}}-\frac{\partial F}{\partial\dot{\bfx}_{s}}\cdot\frac{\partial G}{\partial\bfx_{s}}\right)+\\
\sum_{s}\frac{q_{s}}{m_{s}\Delta V}\left(\frac{\partial F}{\partial\dot{\bfx}_{s}}\cdot\sum_{J}W_{\sigma_{1J}}\left(\bfx_{s}\right)\frac{\partial G}{\partial\bfE_{J}}-\frac{\partial G}{\partial\dot{\bfx}_{s}}\cdot\sum_{J}W_{\sigma_{1J}}\left(\bfx_{s}\right)\frac{\partial F}{\partial\bfE_{J}}\right)+\\
-\sum_{s}\frac{q_{s}}{m_{s}^{2}}\frac{\partial F}{\partial\dot{\bfx}_{s}}\cdot\left[\sum_{K}W_{\sigma_{2K}}\left(\bfx_{s}\right)\bfB_{K}\right]\times\frac{\partial G}{\partial\dot{\bfx}_{s}}~.\label{eq:22}
\end{multline}
With this Poisson bracket, the time evolution of the system is
\begin{eqnarray}
\dot{g}=\{g,H_{sd}\}~,\label{EqnDEM}
\end{eqnarray}
where $g=[\bfE_{J},\bfB_{K},\bfx_{s},\dot{\bfx}_{s}].$ Now we introduce
the discrete local energy $\epsilon_{I}$ at the $I$'th grid, 
\begin{eqnarray}
\varepsilon_{I}=\sum_{s}\frac{1}{2}m_{s}\dot{\bfx}_{s}^{2}\WZERO{\bfx_{s}}+\bfE_{I}^{2}+\bfB_{I}^{2}.
\end{eqnarray}
The evolution of $\varepsilon_{I}$ is
\begin{eqnarray}
\dot{\varepsilon}_{I} & = & \left\{ \varepsilon_{I},H_{sd}\right\} ~,
\end{eqnarray}
or more specifically, 
\begin{eqnarray}
\dot{\varepsilon}_{I} & = & \sum_{s}q_{s}\left(\dot{\bfx}_{s}\cdot\sum_{J'}\bfE_{J'}\WONEJp{\bfx_{s}}\right)\WZERO{\bfx_{s}}+\sum_{s}\frac{1}{2}m_{s}\dot{\bfx}_{s}^{2}\left(\dot{\bfx}_{s}\cdot\nabla\WZERO{\bfx_{s}}\right)+\nonumber \\
 &  & \bfE_{I}\cdot\sum_{K}\CURLD_{KI}\bfB_{K}-\bfE_{I}\cdot\sum_{s}q_{s}\dot{\bfx}_{s}\WONEA{\bfx_{s}}-\bfB_{I}\cdot\sum_{J}\CURLD_{IJ}\bfE_{J}~.\label{EqnDEPSLDT}
\end{eqnarray}
We will see that the right hand side of \EQ{EqnDEPSLDT} can be
written as a discrete divergence of a discrete vector field, which
means that \EQ{EqnDEPSLDT} is a discrete energy conservation law.

Let us divide the RHS of \EQ{EqnDEPSLDT} into three terms,
\begin{eqnarray}
\dot{\varepsilon}_{I} & = & T_{1}+T_{2}+T_{3}~,
\end{eqnarray}
where 
\begin{eqnarray}
T_{1} & = & \bfE_{I}\cdot\sum_{K}\CURLD_{KI}\bfB_{K}-\bfB_{I}\cdot\sum_{J}\CURLD_{IJ}\bfE_{J}~,\\
T_{2} & = & \sum_{s}\frac{1}{2}m_{s}\dot{\bfx}_{s}^{2}\left(\dot{\bfx}_{s}\cdot\nabla\WZERO{\bfx_{s}}\right)~,\\
T_{3} & = & \sum_{s}q_{s}\left(\dot{\bfx}_{s}\cdot\sum_{J'}\bfE_{J'}\WONEJp{\bfx_{s}}\right)\WZERO{\bfx_{s}}-\bfE_{I}\cdot\sum_{s}q_{s}\dot{\bfx}_{s}\WONEA{\bfx_{s}}~.
\end{eqnarray}
Firstly, for $T_{1}$, we can check that this term can be written
as a discrete divergence 
\begin{eqnarray}
T_{1}=-\sum_{K}\DIVD_{IK}(\bfE\times*\bfB)_{K}~,
\end{eqnarray}
where $\bfE\times*\bfB$ is defined as 
\begin{eqnarray}
(\bfE\times*\bfB)_{i,j,k} & = & \left[\begin{array}{c}
{E_{y}}_{i,j,k}{B_{z}}_{i-1,j,k}-{E_{z}}_{i,j,k}{B_{y}}_{i-1,j,k}\\
{E_{z}}_{i,j,k}{B_{x}}_{i,j-1,k}-{E_{x}}_{i,j,k}{B_{z}}_{i,j-1,k}\\
{E_{x}}_{i,j,k}{B_{y}}_{i,j,k-1}-{E_{y}}_{i,j,k}{B_{x}}_{i,j,k-1}
\end{array}\right]^{T}.
\end{eqnarray}
Next, we investigate $T_{2}$ which represents the energy flow of
particles, 
\begin{eqnarray}
T_{2} & = & \sum_{s}\frac{1}{2}m_{s}\dot{\bfx}_{s}^{2}\left(\dot{\bfx}_{s}\cdot\nabla\WZERO{\bfx_{s}}\right)\nonumber \\
 & = & \sum_{s}\frac{1}{2}m_{s}\dot{\bfx}_{s}^{2}\left(\dot{\bfx}_{s}\cdot\sum_{J}\WONE{\bfx_{s}}{\nabla_{d}}_{JI}\right)\nonumber \\
 & = & \sum_{J}{\nabla_{d}}_{JI}\sum_{s}\frac{1}{2}m_{s}\dot{\bfx}_{s}^{2}\dot{\bfx}_{s}\cdot\WONE{\bfx_{s}}\nonumber \\
 & = & \sum_{J}{\nabla_{d}}_{JI}\mathbf{S}_{J}=-\sum_{K}\DIVD_{IK}\left(*\mathbf{S}\right)_{K}~,\label{eq:T2}
\end{eqnarray}
where 
\begin{eqnarray}
\mathbf{S}_{J} & = & \sum_{s}\frac{1}{2}m_{s}\dot{\bfx}_{s}^{2}\dot{\bfx}_{s}\cdot\WONE{\bfx_{s}}~,\\
\left(*\mathbf{S}\right)_{K} & = & [{S_{x}}_{i-1,j,k},{S_{y}}_{i,j-1,k},{S_{z}}_{i,j,k-1}]~.
\end{eqnarray}
Thus this term is also a discrete divergence. Finally, let us look
at $T_{3}$, which appears only in the discrete particle-field system.
It is 
\begin{eqnarray}
T_{3} & = & \sum_{s}q_{s}\left(\dot{\bfx}_{s}\cdot\sum_{J}\bfE_{J}\WONE{\bfx_{s}}\right)\WZERO{\bfx_{s}}-\bfE_{I}\cdot\sum_{s}q_{s}\dot{\bfx}_{s}\WONEA{\bfx_{s}}\\
 & = & \sum_{s}q_{s}\left(\sum_{J'}\dot{\bfx}_{s}\cdot\bfE_{J}\WONE{\bfx_{s}}\WZERO{\bfx_{s}}-\bfE_{I}\cdot\dot{\bfx}_{s}\WONEA{\bfx_{s}}\right)~.\\
 & = & \sum_{s}q_{s}F\left(s\right)\left(\WZERO{\bfx_{s}}-\frac{p\left(s,I\right)}{F\left(s\right)}\right)~,
\end{eqnarray}
where
\begin{eqnarray}
F\left(s\right) & = & \sum_{I}p\left(s,I\right)\thinspace,\\
p\left(s,I\right) & = & \bfE_{I}\cdot\dot{\bfx}_{s}\WONEA{\bfx_{s}}~.
\end{eqnarray}
From the definition of $\WZERO{\bfx}$, we have $\sum_{I}\WZERO{\bfx_{s}}=1~.$
Therefore,
\begin{eqnarray}
\sum_{I}\left(\WZERO{\bfx_{s}}-\frac{p\left(s,I\right)}{F\left(s\right)}\right) & = & 0~,
\end{eqnarray}
which means that the sum of $T_{3}$ over all spatial grid points
vanishes. Now we invoke the two theorems approved in the Appendix,
which states that a discrete sum-free field must be a discrete divergence
field and that if the sum-free field is local, then the divergence
field is local. Therefore, for each particle, $T_{3}$ can be expressed
as discrete divergence of a discrete vector field $\mathbf{G}_{s}$,
\begin{eqnarray}
T_{3} & = & \sum_{s}q_{s}\sum_{J'}\DIVD_{IJ'}{\mathbf{G}_{s}}_{J'}F\left(s\right)~.
\end{eqnarray}
 For a particular $s$, $\WZERO{\bfx_{s}}$, $\WONE{\bfx_{s}}$ are
local near $\bfx_{s}$, so is $\WZERO{\bfx_{s}}-p\left(s,I\right)/F(s)$
. Thus $\bfG_{s}$ is also local near $\bfx_{s}$. 

Finally, we obtain the local energy conservation law for the spatial-discretized
Vlasov-Maxwell system. 
\begin{eqnarray}
 &  & \frac{\partial}{\partial t}\left(\sum_{s}\frac{1}{2}m_{s}\dot{\bfx}_{s}^{2}\WZERO{\bfx_{s}}+\bfE_{I}^{2}+\bfB_{I}^{2}\right)+\nonumber \\
 &  & \sum_{K}\DIVD_{IK}\left((\bfE\times*\bfB)_{K}+\left(*\mathbf{S}\right)_{K}-\sum_{s}q_{s}{\mathbf{G}_{s}}_{K}F\left(s\right)\right)=0~.
\end{eqnarray}

In conclusion, we started from the Hamiltonian theory of the spatially-discretized
Vlasov-Maxwell system, used the property of Whitney forms, and derived
a discrete local energy conservation law. In practice the spatially-discretized
Vlasov-Maxwell system also need a temporal discretize to become a
numerical scheme, after which the total energy will not be an exact
invariant. However if we apply a symplectic integrator to perform
the temporal discretize, then the total energy error can be bounded
within a small value for all simulation time-steps \cite{Ruth83,Feng85,Feng86,Hairer02}.
Investigation on discrete conservation laws for such systems are planned
for future work.

\appendix

\section{a discrete sum-free field is a discrete divergence field}

In this appendix, we prove the following two theorems.

\newtheorem{theorem1}{Theorem} \newtheorem{lemma}{Lemma} \begin{theorem1}\label{LemR}
If a discrete scalar field $R$ is sum-free, i.e.,
\begin{eqnarray}
\sum_{I}R_{I} & = & 0~,
\end{eqnarray}
then it can be expressed as a discrete divergence of a discrete vector
field $\mathbf{G}$, i.e., 
\begin{eqnarray}
R_{I} & = & \sum_{J}\DIVD_{IJ}\mathbf{G}_{J}~.\label{EqnRIGJ}
\end{eqnarray}
\end{theorem1}

The discrete field $R$ in Theorem \ref{LemR} can also be a function
of the continuous spatial coordinate $\bfx$, such as the Whitney
form $W_{\sigma_{0J}}(\bfx).$ A discrete field or a field component
$F(\bfx)$ is called local, if for every $\bfx_{0}$, there exists
a positive constant $C$ such that 
\begin{eqnarray}
F_{I}(\bfx_{0})=0\thinspace,\forall I\in\left\{ J|\quad|\bfx_{J}-\bfx_{0}|>C\right\} .
\end{eqnarray}

\begin{theorem1}\label{LemR2} If a discrete sum-free scalar field
$R$ is local, then there exists a local discrete vector field $\bfG$
such that Eq.\,\eqref{EqnRIGJ} holds. \end{theorem1}

These two theorems also imply that a (local) discrete sum-free vector
field or tensor field can be expressed as a discrete divergence of
a (local) tensor or high order tensor, because for each component
of the vector field or tensor field, Theorems \ref{LemR} and \ref{LemR2}
apply. To prove these two theorems, we first prove the following three
lemmas. \begin{lemma} The discrete scalar field $R_{x}(i',j',k')$
defied as 
\begin{eqnarray}
R_{xI}(i',j',k')=\left\{ \begin{array}{lc}
1, & \textrm{if }I=[i'+1,j',k']~,\\
-1, & \textrm{if }I=[i',j',k']~,\\
0, & \textrm{otherwise}~,
\end{array}\right.
\end{eqnarray}
can be expressed as a composition of $\DIVD$ and a discrete vector
field $\mathbf{G}_{x}(i',j',k')$, 
\begin{eqnarray}
R_{xI}(i',j',k') & = & \sum_{J}\DIVD_{IJ}\mathbf{G}_{xJ}(i',j',k')~.
\end{eqnarray}
\end{lemma}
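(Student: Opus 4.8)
The plan is to prove this lemma constructively: I would exhibit the vector field $\bfG_{x}(i',j',k')$ explicitly and then verify the claimed identity by direct substitution into the definition \EQ{EqnDEFDIVD} of $\DIVD$. No general machinery is needed, since the target field $R_{x}(i',j',k')$ is about as simple as a nonzero sum-free field can be.

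The key observation is that $R_{x}(i',j',k')$ is the difference of two unit spikes sitting at adjacent grid points separated by one lattice step in the $x$-direction, which is exactly the signature of the forward $x$-difference appearing in the first term of \EQ{EqnDEFDIVD}. This suggests taking $\bfG_{x}(i',j',k')$ to have only its $x$-component nonzero and supported on a single grid point. Concretely, I would let the $x$-component equal $-1$ at the grid point $[i'+1,j',k']$ and be zero at every other grid point, while the $y$- and $z$-components vanish identically.

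With this choice the $y$- and $z$-contributions to $\DIVD$ drop out, and only the forward $x$-difference survives. I would then check the three cases: at $[i',j',k']$ the forward difference gives $-1-0=-1$; at $[i'+1,j',k']$ it gives $0-(-1)=+1$; and at every other grid point both terms of the difference vanish. This reproduces $R_{x}(i',j',k')$ exactly, establishing $R_{xI}(i',j',k')=\sum_{J}\DIVD_{IJ}\mathbf{G}_{xJ}(i',j',k')$.

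Since the computation is elementary, there is no serious obstacle; the only point requiring care is the bookkeeping of the sign and the index alignment in the forward-difference convention, so that the values $+1$ and $-1$ land on the intended grid points. I would also remark that, because the support of $\bfG_{x}$ is a single grid point, the construction is automatically compatible with the periodic boundary conditions and is manifestly local, a feature that will matter when this lemma is later combined with its $y$- and $z$-analogues and assembled into the proof of the locality statement in Theorem \ref{LemR2}.
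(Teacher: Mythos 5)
Your construction is exactly the one used in the paper: the vector field with $x$-component equal to $-1$ at $J=[i'+1,j',k']$ and zero elsewhere, verified by direct substitution into the definition of $\DIVD$, and your case check of the forward difference is correct. This matches the paper's proof in both the choice of $\bfG_{x}$ and the method of verification.
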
 \begin{proof} Let $\bfG_{x}\left(i',j',k'\right)$ be
\begin{eqnarray}
\bfG_{xJ}\left(i',j',k'\right)=\left\{ \begin{array}{lc}
[-1,0,0], & \textrm{if }J=[i'+1,j',k']~,\\
\left[0,0,0\right], & \textrm{otherwise}~.
\end{array}\right.
\end{eqnarray}
Then it is straightforward to verify that 
\begin{eqnarray}
R_{xI}(i',j',k') & = & \sum_{J}\DIVD_{IJ}\mathbf{G}_{xJ}(i',j',k')~.
\end{eqnarray}
\end{proof} Using the similar technique, we can construct $\bfG_{y}\left(i',j',k'\right)$
and $\bfG_{z}\left(i',j',k'\right)$ for scalar fields $R_{y}(i',j',k')$
and $R_{z}(i',j',k')$ as well, i.e.,
\begin{eqnarray}
R_{yI}(i',j',k') & = & \left\{ \begin{array}{lc}
1, & \textrm{if }I=[i',j'+1,k']~,\\
-1, & \textrm{if }I=[i',j',k']~,\\
0, & \textrm{otherwise}~,
\end{array}\right.\\
R_{zI}(i',j',k') & = & \left\{ \begin{array}{lc}
1, & \textrm{if }I=[i',j',k'+1]~,\\
-1, & \textrm{if }I=[i',j',k']~,\\
0, & \textrm{otherwise}~,
\end{array}\right.\\
\bfG_{yJ}\left(i',j',k'\right) & = & \left\{ \begin{array}{lc}
[0,-1,0], & \textrm{if }J=[i',j'+1,k']~,\\
\left[0,0,0\right], & \textrm{otherwise}~,
\end{array}\right.\\
\bfG_{zJ}\left(i',j',k'\right) & = & \left\{ \begin{array}{lc}
[0,0,-1], & \textrm{if }J=[i',j',k'+1]~,\\
\left[0,0,0\right], & \textrm{otherwise}~.
\end{array}\right.
\end{eqnarray}

\begin{lemma} The discrete scalar field $R^{\dagger}\left(I_{1},I_{2}\right)$
defined as 
\begin{eqnarray}
R_{I}^{\dagger}\left(I_{1},I_{2}\right) & = & \left\{ \begin{array}{lc}
-1, & \textrm{if }I=I_{1}~,\\
1, & \textrm{if }I=I_{2}~,\\
0, & \textrm{otherwise}
\end{array}\right.
\end{eqnarray}
can be written in the following form 
\begin{eqnarray}
R_{I}^{\dagger}\left(I_{1},I_{2}\right) & = & \sum_{I'\in Y_{x}}a_{x}R_{xI}(I')+\sum_{I'\in Y_{y}}a_{y}R_{yI}(I')+\sum_{I'\in Y_{z}}a_{z}R_{zI}(I')~,\label{EqnRDAGGER}
\end{eqnarray}
where $Y_{x}$, $Y_{y}$ and $Y_{z}$ are some indices sets, and $a_{x}$,
$a_{y}$ and $a_{z}$ are some integers. \end{lemma}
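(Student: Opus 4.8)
The plan is to connect the two grid points $I_1=[i_1,j_1,k_1]$ and $I_2=[i_2,j_2,k_2]$ by an axis-aligned lattice path, and to realize $R^{\dagger}(I_1,I_2)$ as a telescoping sum of the elementary fields $R_x$, $R_y$, $R_z$ of Lemma 1 taken along that path. The single design choice that makes the single-coefficient form in \eqref{EqnRDAGGER} come out right is to take a path that is \emph{monotone} in each coordinate, so that every step along a given axis carries the same sign.

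First I would record the telescoping identity that underlies the whole argument. For a fixed transverse pair $(j',k')$ and for $a<b$ the base-indexed field $R_x$ sums to a pure dipole,
\begin{eqnarray}
\sum_{i=a}^{b-1}R_{xI}([i,j',k'])=\left\{\begin{array}{lc}-1,&I=[a,j',k']~,\\1,&I=[b,j',k']~,\\0,&\textrm{otherwise}~,\end{array}\right.
\end{eqnarray}
since each interior $+1$ is cancelled by the $-1$ of the next summand. Analogous identities hold for $R_y$ and $R_z$. Reading this identity for $a>b$ with an overall sign $-1$ shows that a single integer coefficient $a_x\in\{+1,-1\}$ suffices to produce a dipole between two points differing only in the $x$-coordinate, with $Y_x$ the run of $x$-edges between $i_1$ and $i_2$ at $(j',k')$ (and $Y_x=\varnothing$ if the two coordinates coincide).

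Then I would introduce the waypoints $P_1=[i_2,j_1,k_1]$ and $P_2=[i_2,j_2,k_1]$, so that the four points $I_1,P_1,P_2,I_2$ differ in only one coordinate at each consecutive step. Applying the identity above on the monotone segment $I_1\to P_1$, and its $y$- and $z$-analogues on $P_1\to P_2$ and $P_2\to I_2$, yields $R^{\dagger}(I_1,P_1)$, $R^{\dagger}(P_1,P_2)$, $R^{\dagger}(P_2,I_2)$ respectively, each already in the required form with one coefficient $a_x$, $a_y$, $a_z$ per direction. Finally I would add the three segment dipoles and invoke a second telescoping: the $+1$ that $R^{\dagger}(I_1,P_1)$ places at $P_1$ is annihilated by the $-1$ that $R^{\dagger}(P_1,P_2)$ places there, and likewise at $P_2$, so the total collapses to $-1$ at $I_1$ and $+1$ at $I_2$, i.e. exactly $R^{\dagger}(I_1,I_2)$.

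I do not expect a genuine obstacle, as the statement is elementary; the only points demanding care are fixing each sign $a_x,a_y,a_z$ consistently with the direction of travel, treating degenerate segments in which a coordinate is unchanged, and — since periodic boundaries are assumed — selecting one fundamental-domain representative of $I_2$ so that each segment is a well-defined monotone run of edges. Establishing the single-coefficient-per-axis structure, which rests entirely on the monotonicity of the chosen path, is the subtlest of these bookkeeping steps.
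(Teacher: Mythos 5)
Your proposal is correct and takes essentially the same route as the paper: the paper's proof also connects $I_{1}$ to $I_{2}$ by a monotone axis-aligned staircase path (its sets $Y_{z}$, $Y_{y}$, $Y_{x}$ traverse the $z$-, $y$-, $x$-segments, i.e.\ the reverse axis order of yours), fixes one sign $a_{x},a_{y},a_{z}\in\{+1,-1\}$ per axis by the direction of travel, and relies on exactly the telescoping cancellations you describe, both within each segment and at the waypoints. The only substantive difference is that you additionally flag the choice of a fundamental-domain representative under the periodic boundary conditions, a point the paper leaves implicit.
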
 \begin{proof}
Choose $Y_{x}$, $Y_{y}$ and $Y_{z}$ as 
\begin{eqnarray}
Y_{x} & = & \left\{ [i,j,k]|\min(i_{1},i_{2})\leq i<\max(i_{1},i_{2}),j=j_{2},k=k_{2}\right\} ~,\\
Y_{y} & = & \left\{ [i,j,k]|i=i_{1},\min(j_{1},j_{2})\leq j<\max(j_{1},j_{2}),k=k_{2}\right\} ~,\\
Y_{z} & = & \left\{ [i,j,k]|i=i_{1},j=j_{1},\min(k_{1},k_{2})\leq k<\max(k_{1},k_{2})\right\} ~.
\end{eqnarray}
Choose $a_{x}$, $a_{y}$ and $a_{z}$ as 
\begin{eqnarray}
a_{x} & = & \left\{ \begin{array}{lc}
1, & \textrm{if }i_{1}<i_{2}~,\\
-1, & \textrm{otherwise}~,
\end{array}\right.\\
a_{y} & = & \left\{ \begin{array}{lc}
1, & \textrm{if }j_{1}<j_{2}~,\\
-1, & \textrm{otherwise}~,
\end{array}\right.\\
a_{z} & = & \left\{ \begin{array}{lc}
1, & \textrm{if }k_{1}<k_{2}~,\\
-1, & \textrm{otherwise}~.
\end{array}\right.
\end{eqnarray}
It is straightforward to verify that \EQ{EqnRDAGGER} holds. \end{proof} 

\begin{lemma} A sum-free scalar field $R$ can be written as 
\begin{eqnarray}
R_{I} & = & \sum_{[I',I'']\in Z}b_{[I',I'']}R_{I}^{\dagger}\left(I',I''\right)~,\label{EqnSUMRB}
\end{eqnarray}
where $Z$ is some index-pair set. \end{lemma}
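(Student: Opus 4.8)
The plan is to prove the decomposition directly through a single reference-point identity rather than by induction. First I would fix an arbitrary but definite reference grid index, call it $I_{0}$. Because periodic boundary conditions are assumed in all three directions, the lattice is finite, so every sum over grid indices is a finite sum and the hypothesis $\sum_{I}R_{I}=0$ is meaningful without any convergence concern.

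Next I would write $R$ tautologically as a superposition of Kronecker deltas, $R_{I}=\sum_{J}R_{J}\delta_{IJ}$, and then exploit the sum-free hypothesis to anchor each delta to the reference point. The key observation is that for every grid index $J$ the elementary dipole of the preceding lemma satisfies $R_{I}^{\dagger}\left(I_{0},J\right)=\delta_{IJ}-\delta_{II_{0}}$, since it equals $+1$ at $I=J$, equals $-1$ at $I=I_{0}$, and vanishes otherwise; in the degenerate case $J=I_{0}$ this identity simply gives the zero field on both sides, so including $J=I_{0}$ in the sum is harmless.

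Then I would compute $\sum_{J}R_{J}R_{I}^{\dagger}\left(I_{0},J\right)=\sum_{J}R_{J}\delta_{IJ}-\delta_{II_{0}}\sum_{J}R_{J}$. The first piece collapses to $R_{I}$, while the second piece is annihilated precisely by the sum-free hypothesis $\sum_{J}R_{J}=0$, uniformly in $I$ with no case split. This yields $R_{I}=\sum_{J}R_{J}R_{I}^{\dagger}\left(I_{0},J\right)$, which is exactly the claimed form \eqref{EqnSUMRB} with the index-pair set $Z=\left\{[I_{0},J]\right\}$ ranging over all grid points $J$ and coefficients $b_{[I_{0},J]}=R_{J}$.

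I do not expect a genuine obstacle here: the entire content is the cancellation enabled by the sum-free condition, and the only thing requiring care is the bookkeeping ensuring the reference-point term drops out cleanly. If one instead wanted a proof that more visibly feeds into the first two lemmas, an alternative would be induction on the size of the support of $R$, subtracting a suitable multiple of a single dipole at each step to shrink the support; but the reference-point construction above avoids the case analysis and is the route I would take.
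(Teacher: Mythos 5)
Your proof is correct and is essentially the paper's own argument: both decompose $R$ into dipoles all anchored at a single reference point with coefficients $b_{[I',I'']}=R_{I''}$, and both use the sum-free hypothesis to cancel the reference-point contribution (the paper verifies this via $R_{I_{1}}=-\sum_{I''\in Y_{d},\,I''\neq I_{1}}R_{I''}$ with a case check, while your Kronecker-delta identity $R_{I}^{\dagger}(I_{0},J)=\delta_{IJ}-\delta_{II_{0}}$ makes the same cancellation in one line). The only substantive difference is that the paper picks the reference point $I_{1}$ inside the support of $R$ and restricts $Z$ to the support, which is exactly what is needed when the three lemmas are composed to prove the locality statement (Theorem 2); if you reuse your construction there, choose $I_{0}$ in the support and drop the zero-coefficient pairs rather than taking $I_{0}$ arbitrary.
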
\begin{proof} If $\forall I$
such that $R_{I}=0$, then $Z$ can be chosen as $\varnothing$. Otherwise
let $Y_{d}=\left\{ I_{1},I_{2},I_{3},\dots\right\} $ is the index
set such that if $I\in Y_{d}$, $R_{I}\neq0$ and $I\notin Y_{d}$,
$R_{I}=0$. We can choose the index-pair set $Z$ as 
\begin{eqnarray}
Z=\left\{ [I',I'']|I'=I_{1},I''\in Y_{d}\textrm{ and }I''\neq I_{1}\right\} ,
\end{eqnarray}
where $I_{1}$ is one arbitrarily chosen element in $Y_{d}$, and
the corresponding $b_{[I',I'']}$ is 
\begin{eqnarray}
b_{[I',I'']}=R_{I''}.
\end{eqnarray}
Using the fact that 

\begin{equation}
R_{I_{1}}+\sum_{I''\in Y_{d}\textrm{ and }I''\neq I_{1}}R_{I''}=\sum_{I}R_{I}=0~,
\end{equation}
we have

\begin{equation}
R_{I_{1}}=-\sum_{I''\in Y_{d}\textrm{ and }I''\neq I_{1}}R_{I''}~.
\end{equation}
Then it is straightforward to verify that \EQ{EqnSUMRB} holds.
\end{proof} Composing these three lemmas, we can see that the Theorems
\ref{LemR} and \ref{LemR2} are proved. 
\begin{acknowledgments}
This research is supported by National Magnetic Confinement Fusion
Energy Research Project (2015GB111003, 2014GB124005), National Natural
Science Foundation of China (NSFC-11575185, 11575186, 11305171), JSPS-NRF-NSFC
A3 Foresight Program (NSFC-11261140328), Chinese Scholar Council (201506340103),
Key Research Program of Frontier Sciences CAS (QYZDB-SSW-SYS004),
and the GeoAlgorithmic Plasma Simulator (GAPS) Project.
\end{acknowledgments}

 \bibliographystyle{apsrev4-1}
\bibliography{enecons_pic}

\begin{thebibliography}{46}%
\makeatletter
\providecommand \@ifxundefined [1]{%
 \@ifx{#1\undefined}
}%
\providecommand \@ifnum [1]{%
 \ifnum #1\expandafter \@firstoftwo
 \else \expandafter \@secondoftwo
 \fi
}%
\providecommand \@ifx [1]{%
 \ifx #1\expandafter \@firstoftwo
 \else \expandafter \@secondoftwo
 \fi
}%
\providecommand \natexlab [1]{#1}%
\providecommand \enquote  [1]{``#1''}%
\providecommand \bibnamefont  [1]{#1}%
\providecommand \bibfnamefont [1]{#1}%
\providecommand \citenamefont [1]{#1}%
\providecommand \href@noop [0]{\@secondoftwo}%
\providecommand \href [0]{\begingroup \@sanitize@url \@href}%
\providecommand \@href[1]{\@@startlink{#1}\@@href}%
\providecommand \@@href[1]{\endgroup#1\@@endlink}%
\providecommand \@sanitize@url [0]{\catcode `\\12\catcode `\$12\catcode
  `\&12\catcode `\#12\catcode `\^12\catcode `\_12\catcode `\%12\relax}%
\providecommand \@@startlink[1]{}%
\providecommand \@@endlink[0]{}%
\providecommand \url  [0]{\begingroup\@sanitize@url \@url }%
\providecommand \@url [1]{\endgroup\@href {#1}{\urlprefix }}%
\providecommand \urlprefix  [0]{URL }%
\providecommand \Eprint [0]{\href }%
\providecommand \doibase [0]{http://dx.doi.org/}%
\providecommand \selectlanguage [0]{\@gobble}%
\providecommand \bibinfo  [0]{\@secondoftwo}%
\providecommand \bibfield  [0]{\@secondoftwo}%
\providecommand \translation [1]{[#1]}%
\providecommand \BibitemOpen [0]{}%
\providecommand \bibitemStop [0]{}%
\providecommand \bibitemNoStop [0]{.\EOS\space}%
\providecommand \EOS [0]{\spacefactor3000\relax}%
\providecommand \BibitemShut  [1]{\csname bibitem#1\endcsname}%
\let\auto@bib@innerbib\@empty
\bibitem [{\citenamefont {Morrison}(1980)}]{morrison1980maxwell}%
  \BibitemOpen
  \bibfield  {author} {\bibinfo {author} {\bibfnamefont {P.~J.}\ \bibnamefont
  {Morrison}},\ }\href@noop {} {\bibfield  {journal} {\bibinfo  {journal}
  {Physics Letters A}\ }\textbf {\bibinfo {volume} {80}},\ \bibinfo {pages}
  {383} (\bibinfo {year} {1980})}\BibitemShut {NoStop}%
\bibitem [{\citenamefont {Weinstein}\ and\ \citenamefont
  {Morrison}(1981)}]{weinstein1981comments}%
  \BibitemOpen
  \bibfield  {author} {\bibinfo {author} {\bibfnamefont {A.}~\bibnamefont
  {Weinstein}}\ and\ \bibinfo {author} {\bibfnamefont {P.~J.}\ \bibnamefont
  {Morrison}},\ }\href@noop {} {\bibfield  {journal} {\bibinfo  {journal}
  {Physics Letters A}\ }\textbf {\bibinfo {volume} {86}},\ \bibinfo {pages}
  {235} (\bibinfo {year} {1981})}\BibitemShut {NoStop}%
\bibitem [{\citenamefont {Marsden}\ and\ \citenamefont
  {Weinstein}(1982)}]{marsden1982hamiltonian}%
  \BibitemOpen
  \bibfield  {author} {\bibinfo {author} {\bibfnamefont {J.~E.}\ \bibnamefont
  {Marsden}}\ and\ \bibinfo {author} {\bibfnamefont {A.}~\bibnamefont
  {Weinstein}},\ }\href@noop {} {\bibfield  {journal} {\bibinfo  {journal}
  {Physica D: Nonlinear Phenomena}\ }\textbf {\bibinfo {volume} {4}},\ \bibinfo
  {pages} {394} (\bibinfo {year} {1982})}\BibitemShut {NoStop}%
\bibitem [{\citenamefont {Marsden}\ \emph {et~al.}(1998)\citenamefont
  {Marsden}, \citenamefont {Patrick},\ and\ \citenamefont
  {Shkoller}}]{marsden1998multisymplectic}%
  \BibitemOpen
  \bibfield  {author} {\bibinfo {author} {\bibfnamefont {J.~E.}\ \bibnamefont
  {Marsden}}, \bibinfo {author} {\bibfnamefont {G.~W.}\ \bibnamefont
  {Patrick}}, \ and\ \bibinfo {author} {\bibfnamefont {S.}~\bibnamefont
  {Shkoller}},\ }\href@noop {} {\bibfield  {journal} {\bibinfo  {journal}
  {Communications in Mathematical Physics}\ }\textbf {\bibinfo {volume}
  {199}},\ \bibinfo {pages} {351} (\bibinfo {year} {1998})}\BibitemShut
  {NoStop}%
\bibitem [{\citenamefont {Marsden}\ and\ \citenamefont
  {Ratiu}(2013)}]{marsden2013introduction}%
  \BibitemOpen
  \bibfield  {author} {\bibinfo {author} {\bibfnamefont {J.~E.}\ \bibnamefont
  {Marsden}}\ and\ \bibinfo {author} {\bibfnamefont {T.}~\bibnamefont
  {Ratiu}},\ }\href@noop {} {\emph {\bibinfo {title} {Introduction to mechanics
  and symmetry: a basic exposition of classical mechanical systems}}},\
  Vol.~\bibinfo {volume} {17}\ (\bibinfo  {publisher} {Springer Science \&
  Business Media},\ \bibinfo {year} {2013})\BibitemShut {NoStop}%
\bibitem [{\citenamefont {Qin}\ \emph {et~al.}(2014)\citenamefont {Qin},
  \citenamefont {Burby},\ and\ \citenamefont {Davidson}}]{qin2014field}%
  \BibitemOpen
  \bibfield  {author} {\bibinfo {author} {\bibfnamefont {H.}~\bibnamefont
  {Qin}}, \bibinfo {author} {\bibfnamefont {J.~W.}\ \bibnamefont {Burby}}, \
  and\ \bibinfo {author} {\bibfnamefont {R.~C.}\ \bibnamefont {Davidson}},\
  }\href@noop {} {\bibfield  {journal} {\bibinfo  {journal} {Physical Review
  E}\ }\textbf {\bibinfo {volume} {90}},\ \bibinfo {pages} {043102} (\bibinfo
  {year} {2014})}\BibitemShut {NoStop}%
\bibitem [{\citenamefont {Birdsall}\ and\ \citenamefont
  {Langdon}(1991)}]{birdsall1991plasma}%
  \BibitemOpen
  \bibfield  {author} {\bibinfo {author} {\bibfnamefont {C.~K.}\ \bibnamefont
  {Birdsall}}\ and\ \bibinfo {author} {\bibfnamefont {A.~B.}\ \bibnamefont
  {Langdon}},\ }\href@noop {} {\emph {\bibinfo {title} {Plasma Physics via
  Computer Simulation}}}\ (\bibinfo  {publisher} {IOP Publishing},\ \bibinfo
  {year} {1991})\ p.\ \bibinfo {pages} {293}\BibitemShut {NoStop}%
\bibitem [{\citenamefont {Hockney}\ and\ \citenamefont
  {Eastwood}(1988)}]{hockney1988computer}%
  \BibitemOpen
  \bibfield  {author} {\bibinfo {author} {\bibfnamefont {R.~W.}\ \bibnamefont
  {Hockney}}\ and\ \bibinfo {author} {\bibfnamefont {J.~W.}\ \bibnamefont
  {Eastwood}},\ }\href@noop {} {\emph {\bibinfo {title} {Computer Simulation
  Using Particles}}}\ (\bibinfo  {publisher} {CRC Press},\ \bibinfo {year}
  {1988})\BibitemShut {NoStop}%
\bibitem [{\citenamefont {Cary}\ and\ \citenamefont {Doxas}(1993)}]{Cary93}%
  \BibitemOpen
  \bibfield  {author} {\bibinfo {author} {\bibfnamefont {J.}~\bibnamefont
  {Cary}}\ and\ \bibinfo {author} {\bibfnamefont {I.}~\bibnamefont {Doxas}},\
  }\href {\doibase http://dx.doi.org/10.1006/jcph.1993.1127} {\bibfield
  {journal} {\bibinfo  {journal} {Journal of Computational Physics}\ }\textbf
  {\bibinfo {volume} {107}},\ \bibinfo {pages} {98 } (\bibinfo {year}
  {1993})}\BibitemShut {NoStop}%
\bibitem [{\citenamefont {Nieter}\ and\ \citenamefont
  {Cary}(2004)}]{nieter2004vorpal}%
  \BibitemOpen
  \bibfield  {author} {\bibinfo {author} {\bibfnamefont {C.}~\bibnamefont
  {Nieter}}\ and\ \bibinfo {author} {\bibfnamefont {J.~R.}\ \bibnamefont
  {Cary}},\ }\href@noop {} {\bibfield  {journal} {\bibinfo  {journal} {Journal
  of Computational Physics}\ }\textbf {\bibinfo {volume} {196}},\ \bibinfo
  {pages} {448} (\bibinfo {year} {2004})}\BibitemShut {NoStop}%
\bibitem [{\citenamefont {Xiang}\ and\ \citenamefont {Cary}(2008)}]{Xiang08}%
  \BibitemOpen
  \bibfield  {author} {\bibinfo {author} {\bibfnamefont {N.}~\bibnamefont
  {Xiang}}\ and\ \bibinfo {author} {\bibfnamefont {J.~R.}\ \bibnamefont
  {Cary}},\ }\href {\doibase 10.1103/PhysRevLett.100.085002} {\bibfield
  {journal} {\bibinfo  {journal} {Phys. Rev. Lett.}\ }\textbf {\bibinfo
  {volume} {100}},\ \bibinfo {pages} {085002} (\bibinfo {year}
  {2008})}\BibitemShut {NoStop}%
\bibitem [{\citenamefont {Germaschewski}\ \emph {et~al.}(2016)\citenamefont
  {Germaschewski}, \citenamefont {Fox}, \citenamefont {Abbott}, \citenamefont
  {Ahmadi}, \citenamefont {Maynard}, \citenamefont {Wang}, \citenamefont
  {Ruhl},\ and\ \citenamefont {Bhattacharjee}}]{germaschewski2016plasma}%
  \BibitemOpen
  \bibfield  {author} {\bibinfo {author} {\bibfnamefont {K.}~\bibnamefont
  {Germaschewski}}, \bibinfo {author} {\bibfnamefont {W.}~\bibnamefont {Fox}},
  \bibinfo {author} {\bibfnamefont {S.}~\bibnamefont {Abbott}}, \bibinfo
  {author} {\bibfnamefont {N.}~\bibnamefont {Ahmadi}}, \bibinfo {author}
  {\bibfnamefont {K.}~\bibnamefont {Maynard}}, \bibinfo {author} {\bibfnamefont
  {L.}~\bibnamefont {Wang}}, \bibinfo {author} {\bibfnamefont {H.}~\bibnamefont
  {Ruhl}}, \ and\ \bibinfo {author} {\bibfnamefont {A.}~\bibnamefont
  {Bhattacharjee}},\ }\href@noop {} {\bibfield  {journal} {\bibinfo  {journal}
  {Journal of Computational Physics}\ }\textbf {\bibinfo {volume} {318}},\
  \bibinfo {pages} {305} (\bibinfo {year} {2016})}\BibitemShut {NoStop}%
\bibitem [{\citenamefont {Huang}\ \emph {et~al.}(2016)\citenamefont {Huang},
  \citenamefont {Zeng}, \citenamefont {Wang}, \citenamefont {Meyers},
  \citenamefont {Yi},\ and\ \citenamefont {Albright}}]{huang16}%
  \BibitemOpen
  \bibfield  {author} {\bibinfo {author} {\bibfnamefont {C.-K.}\ \bibnamefont
  {Huang}}, \bibinfo {author} {\bibfnamefont {Y.}~\bibnamefont {Zeng}},
  \bibinfo {author} {\bibfnamefont {Y.}~\bibnamefont {Wang}}, \bibinfo {author}
  {\bibfnamefont {M.}~\bibnamefont {Meyers}}, \bibinfo {author} {\bibfnamefont
  {S.}~\bibnamefont {Yi}}, \ and\ \bibinfo {author} {\bibfnamefont
  {B.}~\bibnamefont {Albright}},\ }\href@noop {} {\bibfield  {journal}
  {\bibinfo  {journal} {Computer Physics Communications}\ }\textbf {\bibinfo
  {volume} {207}},\ \bibinfo {pages} {123} (\bibinfo {year}
  {2016})}\BibitemShut {NoStop}%
\bibitem [{\citenamefont {Qiang}(2016)}]{Qiang2016}%
  \BibitemOpen
  \bibfield  {author} {\bibinfo {author} {\bibfnamefont {J.}~\bibnamefont
  {Qiang}},\ }\href@noop {} {\bibfield  {journal} {\bibinfo  {journal}
  {arXiv:1610.04763}\ } (\bibinfo {year} {2016})}\BibitemShut {NoStop}%
\bibitem [{\citenamefont {Oeftiger}(2016)}]{Oeftiger16}%
  \BibitemOpen
  \bibfield  {author} {\bibinfo {author} {\bibfnamefont {A.}~\bibnamefont
  {Oeftiger}},\ }in\ \href@noop {} {\emph {\bibinfo {booktitle} {Proceedings of
  HB2016}}}\ (\bibinfo {address} {Malmo, Sweden},\ \bibinfo {year} {2016})\ p.\
  \bibinfo {pages} {126}\BibitemShut {NoStop}%
\bibitem [{\citenamefont {Planche}\ and\ \citenamefont
  {Jung}(2016)}]{Planche16}%
  \BibitemOpen
  \bibfield  {author} {\bibinfo {author} {\bibfnamefont {T.}~\bibnamefont
  {Planche}}\ and\ \bibinfo {author} {\bibfnamefont {P.~M.}\ \bibnamefont
  {Jung}},\ }\href@noop {} {\bibfield  {journal} {\bibinfo  {journal}
  {arXiv:1603.02976}\ } (\bibinfo {year} {2016})}\BibitemShut {NoStop}%
\bibitem [{\citenamefont {Squire}\ \emph
  {et~al.}(2012{\natexlab{a}})\citenamefont {Squire}, \citenamefont {Qin},\
  and\ \citenamefont {Tang}}]{Squire4748}%
  \BibitemOpen
  \bibfield  {author} {\bibinfo {author} {\bibfnamefont {J.}~\bibnamefont
  {Squire}}, \bibinfo {author} {\bibfnamefont {H.}~\bibnamefont {Qin}}, \ and\
  \bibinfo {author} {\bibfnamefont {W.~M.}\ \bibnamefont {Tang}},\ }\href@noop
  {} {\emph {\bibinfo {title} {Geometric Integration Of The Vlasov-Maxwell
  System With A Variational Particle-in-cell Scheme}}},\ \bibinfo {type} {Tech.
  Rep.}\ \bibinfo {number} {PPPL-4748}\ (\bibinfo  {institution} {Princeton
  Plasma Physics Laboratory},\ \bibinfo {year} {2012})\BibitemShut {NoStop}%
\bibitem [{\citenamefont {Squire}\ \emph
  {et~al.}(2012{\natexlab{b}})\citenamefont {Squire}, \citenamefont {Qin},\
  and\ \citenamefont {Tang}}]{squire2012geometric}%
  \BibitemOpen
  \bibfield  {author} {\bibinfo {author} {\bibfnamefont {J.}~\bibnamefont
  {Squire}}, \bibinfo {author} {\bibfnamefont {H.}~\bibnamefont {Qin}}, \ and\
  \bibinfo {author} {\bibfnamefont {W.~M.}\ \bibnamefont {Tang}},\ }\href@noop
  {} {\bibfield  {journal} {\bibinfo  {journal} {Physics of Plasmas
  (1994-present)}\ }\textbf {\bibinfo {volume} {19}},\ \bibinfo {pages}
  {084501} (\bibinfo {year} {2012}{\natexlab{b}})}\BibitemShut {NoStop}%
\bibitem [{\citenamefont {Xiao}\ \emph {et~al.}(2013)\citenamefont {Xiao},
  \citenamefont {Liu}, \citenamefont {Qin},\ and\ \citenamefont
  {Yu}}]{xiao2013variational}%
  \BibitemOpen
  \bibfield  {author} {\bibinfo {author} {\bibfnamefont {J.}~\bibnamefont
  {Xiao}}, \bibinfo {author} {\bibfnamefont {J.}~\bibnamefont {Liu}}, \bibinfo
  {author} {\bibfnamefont {H.}~\bibnamefont {Qin}}, \ and\ \bibinfo {author}
  {\bibfnamefont {Z.}~\bibnamefont {Yu}},\ }\href@noop {} {\bibfield  {journal}
  {\bibinfo  {journal} {Phys. Plasmas}\ }\textbf {\bibinfo {volume} {20}},\
  \bibinfo {pages} {102517} (\bibinfo {year} {2013})}\BibitemShut {NoStop}%
\bibitem [{\citenamefont {Kraus}(2013)}]{kraus2013variational}%
  \BibitemOpen
  \bibfield  {author} {\bibinfo {author} {\bibfnamefont {M.}~\bibnamefont
  {Kraus}},\ }\href@noop {} {\bibfield  {journal} {\bibinfo  {journal}
  {arXiv:1307.5665}\ } (\bibinfo {year} {2013})}\BibitemShut {NoStop}%
\bibitem [{\citenamefont {Evstatiev}\ and\ \citenamefont
  {Shadwick}(2013)}]{evstatiev2013variational}%
  \BibitemOpen
  \bibfield  {author} {\bibinfo {author} {\bibfnamefont {E.}~\bibnamefont
  {Evstatiev}}\ and\ \bibinfo {author} {\bibfnamefont {B.}~\bibnamefont
  {Shadwick}},\ }\href@noop {} {\bibfield  {journal} {\bibinfo  {journal}
  {Journal of Computational Physics}\ }\textbf {\bibinfo {volume} {245}},\
  \bibinfo {pages} {376} (\bibinfo {year} {2013})}\BibitemShut {NoStop}%
\bibitem [{\citenamefont {Shadwick}\ \emph {et~al.}(2014)\citenamefont
  {Shadwick}, \citenamefont {Stamm},\ and\ \citenamefont
  {Evstatiev}}]{Shadwick14}%
  \BibitemOpen
  \bibfield  {author} {\bibinfo {author} {\bibfnamefont {B.~A.}\ \bibnamefont
  {Shadwick}}, \bibinfo {author} {\bibfnamefont {A.~B.}\ \bibnamefont {Stamm}},
  \ and\ \bibinfo {author} {\bibfnamefont {E.~G.}\ \bibnamefont {Evstatiev}},\
  }\href@noop {} {\bibfield  {journal} {\bibinfo  {journal} {Physics of
  Plasmas}\ }\textbf {\bibinfo {volume} {21}},\ \bibinfo {pages} {055708}
  (\bibinfo {year} {2014})}\BibitemShut {NoStop}%
\bibitem [{\citenamefont {Xiao}\ \emph
  {et~al.}(2015{\natexlab{a}})\citenamefont {Xiao}, \citenamefont {Liu},
  \citenamefont {Qin}, \citenamefont {Yu},\ and\ \citenamefont
  {Xiang}}]{xiao2015variational}%
  \BibitemOpen
  \bibfield  {author} {\bibinfo {author} {\bibfnamefont {J.}~\bibnamefont
  {Xiao}}, \bibinfo {author} {\bibfnamefont {J.}~\bibnamefont {Liu}}, \bibinfo
  {author} {\bibfnamefont {H.}~\bibnamefont {Qin}}, \bibinfo {author}
  {\bibfnamefont {Z.}~\bibnamefont {Yu}}, \ and\ \bibinfo {author}
  {\bibfnamefont {N.}~\bibnamefont {Xiang}},\ }\href@noop {} {\bibfield
  {journal} {\bibinfo  {journal} {Physics of Plasmas (1994-present)}\ }\textbf
  {\bibinfo {volume} {22}},\ \bibinfo {pages} {092305} (\bibinfo {year}
  {2015}{\natexlab{a}})}\BibitemShut {NoStop}%
\bibitem [{\citenamefont {Xiao}\ \emph
  {et~al.}(2015{\natexlab{b}})\citenamefont {Xiao}, \citenamefont {Qin},
  \citenamefont {Liu}, \citenamefont {He}, \citenamefont {Zhang},\ and\
  \citenamefont {Sun}}]{xiao2015explicit}%
  \BibitemOpen
  \bibfield  {author} {\bibinfo {author} {\bibfnamefont {J.}~\bibnamefont
  {Xiao}}, \bibinfo {author} {\bibfnamefont {H.}~\bibnamefont {Qin}}, \bibinfo
  {author} {\bibfnamefont {J.}~\bibnamefont {Liu}}, \bibinfo {author}
  {\bibfnamefont {Y.}~\bibnamefont {He}}, \bibinfo {author} {\bibfnamefont
  {R.}~\bibnamefont {Zhang}}, \ and\ \bibinfo {author} {\bibfnamefont
  {Y.}~\bibnamefont {Sun}},\ }\href@noop {} {\bibfield  {journal} {\bibinfo
  {journal} {Physics of Plasmas (1994-present)}\ }\textbf {\bibinfo {volume}
  {22}},\ \bibinfo {pages} {112504} (\bibinfo {year}
  {2015}{\natexlab{b}})}\BibitemShut {NoStop}%
\bibitem [{\citenamefont {Crouseilles}\ \emph {et~al.}(2015)\citenamefont
  {Crouseilles}, \citenamefont {Einkemmer},\ and\ \citenamefont
  {Faou}}]{crouseilles2015hamiltonian}%
  \BibitemOpen
  \bibfield  {author} {\bibinfo {author} {\bibfnamefont {N.}~\bibnamefont
  {Crouseilles}}, \bibinfo {author} {\bibfnamefont {L.}~\bibnamefont
  {Einkemmer}}, \ and\ \bibinfo {author} {\bibfnamefont {E.}~\bibnamefont
  {Faou}},\ }\href@noop {} {\bibfield  {journal} {\bibinfo  {journal} {Journal
  of Computational Physics}\ }\textbf {\bibinfo {volume} {283}},\ \bibinfo
  {pages} {224} (\bibinfo {year} {2015})}\BibitemShut {NoStop}%
\bibitem [{\citenamefont {Qin}\ \emph {et~al.}(2015)\citenamefont {Qin},
  \citenamefont {He}, \citenamefont {Zhang}, \citenamefont {Liu}, \citenamefont
  {Xiao},\ and\ \citenamefont {Wang}}]{Qin15JCP}%
  \BibitemOpen
  \bibfield  {author} {\bibinfo {author} {\bibfnamefont {H.}~\bibnamefont
  {Qin}}, \bibinfo {author} {\bibfnamefont {Y.}~\bibnamefont {He}}, \bibinfo
  {author} {\bibfnamefont {R.}~\bibnamefont {Zhang}}, \bibinfo {author}
  {\bibfnamefont {J.}~\bibnamefont {Liu}}, \bibinfo {author} {\bibfnamefont
  {J.}~\bibnamefont {Xiao}}, \ and\ \bibinfo {author} {\bibfnamefont
  {Y.}~\bibnamefont {Wang}},\ }\href {\doibase
  http://dx.doi.org/10.1016/j.jcp.2015.04.056} {\bibfield  {journal} {\bibinfo
  {journal} {Journal of Computational Physics}\ }\textbf {\bibinfo {volume}
  {297}},\ \bibinfo {pages} {721 } (\bibinfo {year} {2015})}\BibitemShut
  {NoStop}%
\bibitem [{\citenamefont {He}\ \emph {et~al.}(2015)\citenamefont {He},
  \citenamefont {Qin}, \citenamefont {Sun}, \citenamefont {Xiao}, \citenamefont
  {Zhang},\ and\ \citenamefont {Liu}}]{he2015hamiltonian}%
  \BibitemOpen
  \bibfield  {author} {\bibinfo {author} {\bibfnamefont {Y.}~\bibnamefont
  {He}}, \bibinfo {author} {\bibfnamefont {H.}~\bibnamefont {Qin}}, \bibinfo
  {author} {\bibfnamefont {Y.}~\bibnamefont {Sun}}, \bibinfo {author}
  {\bibfnamefont {J.}~\bibnamefont {Xiao}}, \bibinfo {author} {\bibfnamefont
  {R.}~\bibnamefont {Zhang}}, \ and\ \bibinfo {author} {\bibfnamefont
  {J.}~\bibnamefont {Liu}},\ }\href {\doibase 10.1063/1.4938034} {\bibfield
  {journal} {\bibinfo  {journal} {Physics of Plasmas (1994-present)}\ }\textbf
  {\bibinfo {volume} {22}},\ \bibinfo {pages} {124503} (\bibinfo {year}
  {2015})}\BibitemShut {NoStop}%
\bibitem [{\citenamefont {He}\ \emph {et~al.}(2016)\citenamefont {He},
  \citenamefont {Sun}, \citenamefont {Qin},\ and\ \citenamefont
  {Liu}}]{he2016hamiltonian}%
  \BibitemOpen
  \bibfield  {author} {\bibinfo {author} {\bibfnamefont {Y.}~\bibnamefont
  {He}}, \bibinfo {author} {\bibfnamefont {Y.}~\bibnamefont {Sun}}, \bibinfo
  {author} {\bibfnamefont {H.}~\bibnamefont {Qin}}, \ and\ \bibinfo {author}
  {\bibfnamefont {J.}~\bibnamefont {Liu}},\ }\href@noop {} {\bibfield
  {journal} {\bibinfo  {journal} {Physics of Plasmas (1994-present)}\ }\textbf
  {\bibinfo {volume} {23}},\ \bibinfo {pages} {092108} (\bibinfo {year}
  {2016})}\BibitemShut {NoStop}%
\bibitem [{\citenamefont {Qin}\ \emph {et~al.}(2016)\citenamefont {Qin},
  \citenamefont {Liu}, \citenamefont {Xiao}, \citenamefont {Zhang},
  \citenamefont {He}, \citenamefont {Wang}, \citenamefont {Sun}, \citenamefont
  {Burby}, \citenamefont {Ellison},\ and\ \citenamefont
  {Zhou}}]{qin2016canonical}%
  \BibitemOpen
  \bibfield  {author} {\bibinfo {author} {\bibfnamefont {H.}~\bibnamefont
  {Qin}}, \bibinfo {author} {\bibfnamefont {J.}~\bibnamefont {Liu}}, \bibinfo
  {author} {\bibfnamefont {J.}~\bibnamefont {Xiao}}, \bibinfo {author}
  {\bibfnamefont {R.}~\bibnamefont {Zhang}}, \bibinfo {author} {\bibfnamefont
  {Y.}~\bibnamefont {He}}, \bibinfo {author} {\bibfnamefont {Y.}~\bibnamefont
  {Wang}}, \bibinfo {author} {\bibfnamefont {Y.}~\bibnamefont {Sun}}, \bibinfo
  {author} {\bibfnamefont {J.~W.}\ \bibnamefont {Burby}}, \bibinfo {author}
  {\bibfnamefont {L.}~\bibnamefont {Ellison}}, \ and\ \bibinfo {author}
  {\bibfnamefont {Y.}~\bibnamefont {Zhou}},\ }\href
  {http://stacks.iop.org/0029-5515/56/i=1/a=014001} {\bibfield  {journal}
  {\bibinfo  {journal} {Nuclear Fusion}\ }\textbf {\bibinfo {volume} {56}},\
  \bibinfo {pages} {014001} (\bibinfo {year} {2016})}\BibitemShut {NoStop}%
\bibitem [{\citenamefont {Webb}(2016)}]{Webb16}%
  \BibitemOpen
  \bibfield  {author} {\bibinfo {author} {\bibfnamefont {S.~D.}\ \bibnamefont
  {Webb}},\ }\href@noop {} {\bibfield  {journal} {\bibinfo  {journal} {Plasma
  Physics and Controlled Fusion}\ }\textbf {\bibinfo {volume} {58}},\ \bibinfo
  {pages} {034007} (\bibinfo {year} {2016})}\BibitemShut {NoStop}%
\bibitem [{\citenamefont {Kraus}\ \emph {et~al.}(2016)\citenamefont {Kraus},
  \citenamefont {Kormann}, \citenamefont {Morrison},\ and\ \citenamefont
  {Sonnendr{\"u}cker}}]{kraus2016gempic}%
  \BibitemOpen
  \bibfield  {author} {\bibinfo {author} {\bibfnamefont {M.}~\bibnamefont
  {Kraus}}, \bibinfo {author} {\bibfnamefont {K.}~\bibnamefont {Kormann}},
  \bibinfo {author} {\bibfnamefont {P.~J.}\ \bibnamefont {Morrison}}, \ and\
  \bibinfo {author} {\bibfnamefont {E.}~\bibnamefont {Sonnendr{\"u}cker}},\
  }\href@noop {} {\bibfield  {journal} {\bibinfo  {journal} {arXiv:1609.03053}\
  } (\bibinfo {year} {2016})}\BibitemShut {NoStop}%
\bibitem [{\citenamefont {Xiao}\ \emph {et~al.}(2016)\citenamefont {Xiao},
  \citenamefont {Qin}, \citenamefont {Morrison}, \citenamefont {Liu},
  \citenamefont {Yu}, \citenamefont {Zhang},\ and\ \citenamefont
  {He}}]{xiao2016explicit}%
  \BibitemOpen
  \bibfield  {author} {\bibinfo {author} {\bibfnamefont {J.}~\bibnamefont
  {Xiao}}, \bibinfo {author} {\bibfnamefont {H.}~\bibnamefont {Qin}}, \bibinfo
  {author} {\bibfnamefont {P.~J.}\ \bibnamefont {Morrison}}, \bibinfo {author}
  {\bibfnamefont {J.}~\bibnamefont {Liu}}, \bibinfo {author} {\bibfnamefont
  {Z.}~\bibnamefont {Yu}}, \bibinfo {author} {\bibfnamefont {R.}~\bibnamefont
  {Zhang}}, \ and\ \bibinfo {author} {\bibfnamefont {Y.}~\bibnamefont {He}},\
  }\href@noop {} {\bibfield  {journal} {\bibinfo  {journal} {Physics of Plasmas
  (1994-present)}\ }\textbf {\bibinfo {volume} {23}},\ \bibinfo {pages}
  {112107} (\bibinfo {year} {2016})}\BibitemShut {NoStop}%
\bibitem [{\citenamefont {Chew}(1994)}]{chew1994electromagnetic}%
  \BibitemOpen
  \bibfield  {author} {\bibinfo {author} {\bibfnamefont {W.}~\bibnamefont
  {Chew}},\ }\href@noop {} {\bibfield  {journal} {\bibinfo  {journal} {Journal
  of Applied Physics}\ }\textbf {\bibinfo {volume} {75}},\ \bibinfo {pages}
  {4843} (\bibinfo {year} {1994})}\BibitemShut {NoStop}%
\bibitem [{\citenamefont {De~Moerloose}\ and\ \citenamefont
  {De~Zutter}(1995)}]{de1995poynting}%
  \BibitemOpen
  \bibfield  {author} {\bibinfo {author} {\bibfnamefont {J.}~\bibnamefont
  {De~Moerloose}}\ and\ \bibinfo {author} {\bibfnamefont {D.}~\bibnamefont
  {De~Zutter}},\ }\href@noop {} {\bibfield  {journal} {\bibinfo  {journal}
  {Microwave and Optical Technology Letters}\ }\textbf {\bibinfo {volume}
  {8}},\ \bibinfo {pages} {257} (\bibinfo {year} {1995})}\BibitemShut {NoStop}%
\bibitem [{\citenamefont {Lewis}(1970)}]{lewis1970energy}%
  \BibitemOpen
  \bibfield  {author} {\bibinfo {author} {\bibfnamefont {H.~R.}\ \bibnamefont
  {Lewis}},\ }\href@noop {} {\bibfield  {journal} {\bibinfo  {journal} {Journal
  of Computational Physics}\ }\textbf {\bibinfo {volume} {6}},\ \bibinfo
  {pages} {136} (\bibinfo {year} {1970})}\BibitemShut {NoStop}%
\bibitem [{\citenamefont {Hirani}(2003)}]{hirani2003discrete}%
  \BibitemOpen
  \bibfield  {author} {\bibinfo {author} {\bibfnamefont {A.~N.}\ \bibnamefont
  {Hirani}},\ }\emph {\bibinfo {title} {Discrete Exterior Calculus}},\
  \href@noop {} {Ph.D. thesis},\ \bibinfo  {school} {California Institute of
  Technology} (\bibinfo {year} {2003})\BibitemShut {NoStop}%
\bibitem [{\citenamefont {Desbrun}\ \emph {et~al.}(2005)\citenamefont
  {Desbrun}, \citenamefont {Hirani}, \citenamefont {Leok},\ and\ \citenamefont
  {Marsden}}]{desbrun2005discrete}%
  \BibitemOpen
  \bibfield  {author} {\bibinfo {author} {\bibfnamefont {M.}~\bibnamefont
  {Desbrun}}, \bibinfo {author} {\bibfnamefont {A.~N.}\ \bibnamefont {Hirani}},
  \bibinfo {author} {\bibfnamefont {M.}~\bibnamefont {Leok}}, \ and\ \bibinfo
  {author} {\bibfnamefont {J.~E.}\ \bibnamefont {Marsden}},\ }\href@noop {}
  {\bibfield  {journal} {\bibinfo  {journal} {arXiv:math/0508341}\ } (\bibinfo
  {year} {2005})}\BibitemShut {NoStop}%
\bibitem [{\citenamefont {Whitney}(1957)}]{whitney1957geometric}%
  \BibitemOpen
  \bibfield  {author} {\bibinfo {author} {\bibfnamefont {H.}~\bibnamefont
  {Whitney}},\ }\href@noop {} {\emph {\bibinfo {title} {Geometric Integration
  Theory}}}\ (\bibinfo  {publisher} {Princeton University Press},\ \bibinfo
  {year} {1957})\BibitemShut {NoStop}%
\bibitem [{\citenamefont {Desbrun}\ \emph {et~al.}(2008)\citenamefont
  {Desbrun}, \citenamefont {Kanso},\ and\ \citenamefont
  {Tong}}]{desbrun2008discrete}%
  \BibitemOpen
  \bibfield  {author} {\bibinfo {author} {\bibfnamefont {M.}~\bibnamefont
  {Desbrun}}, \bibinfo {author} {\bibfnamefont {E.}~\bibnamefont {Kanso}}, \
  and\ \bibinfo {author} {\bibfnamefont {Y.}~\bibnamefont {Tong}},\ }in\
  \href@noop {} {\emph {\bibinfo {booktitle} {Discrete Differential
  Geometry}}}\ (\bibinfo  {publisher} {Springer},\ \bibinfo {year} {2008})\
  pp.\ \bibinfo {pages} {287--324}\BibitemShut {NoStop}%
\bibitem [{\citenamefont {Arnold}\ \emph {et~al.}(2006)\citenamefont {Arnold},
  \citenamefont {Falk},\ and\ \citenamefont {Winther}}]{arnold2006finite}%
  \BibitemOpen
  \bibfield  {author} {\bibinfo {author} {\bibfnamefont {D.~N.}\ \bibnamefont
  {Arnold}}, \bibinfo {author} {\bibfnamefont {R.~S.}\ \bibnamefont {Falk}}, \
  and\ \bibinfo {author} {\bibfnamefont {R.}~\bibnamefont {Winther}},\
  }\href@noop {} {\bibfield  {journal} {\bibinfo  {journal} {Acta numerica}\
  }\textbf {\bibinfo {volume} {15}},\ \bibinfo {pages} {1} (\bibinfo {year}
  {2006})}\BibitemShut {NoStop}%
\bibitem [{\citenamefont {Arnold}\ \emph {et~al.}(2010)\citenamefont {Arnold},
  \citenamefont {Falk},\ and\ \citenamefont {Winther}}]{arnold2010finite}%
  \BibitemOpen
  \bibfield  {author} {\bibinfo {author} {\bibfnamefont {D.}~\bibnamefont
  {Arnold}}, \bibinfo {author} {\bibfnamefont {R.}~\bibnamefont {Falk}}, \ and\
  \bibinfo {author} {\bibfnamefont {R.}~\bibnamefont {Winther}},\ }\href@noop
  {} {\bibfield  {journal} {\bibinfo  {journal} {Bulletin of the American
  mathematical society}\ }\textbf {\bibinfo {volume} {47}},\ \bibinfo {pages}
  {281} (\bibinfo {year} {2010})}\BibitemShut {NoStop}%
\bibitem [{\citenamefont {Monk}(2003)}]{monk2003finite}%
  \BibitemOpen
  \bibfield  {author} {\bibinfo {author} {\bibfnamefont {P.}~\bibnamefont
  {Monk}},\ }\href@noop {} {\emph {\bibinfo {title} {Finite element methods for
  Maxwell's equations}}}\ (\bibinfo  {publisher} {Oxford University Press},\
  \bibinfo {year} {2003})\BibitemShut {NoStop}%
\bibitem [{\citenamefont {Ruth}(1983)}]{Ruth83}%
  \BibitemOpen
  \bibfield  {author} {\bibinfo {author} {\bibfnamefont {R.~D.}\ \bibnamefont
  {Ruth}},\ }\href@noop {} {\bibfield  {journal} {\bibinfo  {journal} {IEEE
  Trans. Nucl. Sci}\ }\textbf {\bibinfo {volume} {30}},\ \bibinfo {pages}
  {2669} (\bibinfo {year} {1983})}\BibitemShut {NoStop}%
\bibitem [{\citenamefont {Feng}(1985)}]{Feng85}%
  \BibitemOpen
  \bibfield  {author} {\bibinfo {author} {\bibfnamefont {K.}~\bibnamefont
  {Feng}},\ }in\ \href@noop {} {\emph {\bibinfo {booktitle} {the Proceedings of
  1984 Beijing Symposium on Differential Geometry and Differential
  Equations}}},\ \bibinfo {editor} {edited by\ \bibinfo {editor} {\bibfnamefont
  {K.}~\bibnamefont {Feng}}}\ (\bibinfo  {publisher} {Science Press},\ \bibinfo
  {year} {1985})\ pp.\ \bibinfo {pages} {42--58}\BibitemShut {NoStop}%
\bibitem [{\citenamefont {Feng}(1986)}]{Feng86}%
  \BibitemOpen
  \bibfield  {author} {\bibinfo {author} {\bibfnamefont {K.}~\bibnamefont
  {Feng}},\ }\href@noop {} {\bibfield  {journal} {\bibinfo  {journal} {J.
  Comput. Maths.}\ }\textbf {\bibinfo {volume} {4}},\ \bibinfo {pages} {279}
  (\bibinfo {year} {1986})}\BibitemShut {NoStop}%
\bibitem [{\citenamefont {Hairer}\ \emph {et~al.}(2002)\citenamefont {Hairer},
  \citenamefont {Lubich},\ and\ \citenamefont {Wanner}}]{Hairer02}%
  \BibitemOpen
  \bibfield  {author} {\bibinfo {author} {\bibfnamefont {E.}~\bibnamefont
  {Hairer}}, \bibinfo {author} {\bibfnamefont {C.}~\bibnamefont {Lubich}}, \
  and\ \bibinfo {author} {\bibfnamefont {G.}~\bibnamefont {Wanner}},\
  }\href@noop {} {\emph {\bibinfo {title} {Geometric Numerical Integration:
  Structure-Preserving Algorithms for Ordinary Differential Equations}}}\
  (\bibinfo  {publisher} {Springer},\ \bibinfo {address} {New York},\ \bibinfo
  {year} {2002})\BibitemShut {NoStop}%
\end{thebibliography}%

\end{document}